\newcommand{\B}{\mathbb{B}}
\newcommand{\post}{\mathrm{post}}
\newcommand{\paths}{\mathrm{paths}}
\newcommand{\prop}{\mathcal{P}}
\newcommand{\subf}{\mathit{sub}}
\newcommand{\propformula}{\Omega}
\DeclareMathOperator{\lF}{\mathbf{F}}
\DeclareMathOperator{\lG}{\mathbf{G}}
\DeclareMathOperator{\lU}{\mathbf{U}}
\DeclareMathOperator{\lX}{\mathbf{X}}
\DeclareMathOperator{\false}{\mathit{false}}
\DeclareMathOperator{\true}{\mathit{true}}
\DeclareMathOperator{\lA}{\mathbf{A}}
\DeclareMathOperator{\lE}{\mathbf{E}}
\newcommand{\placeholder}{?}
\newcommand{\infer}{\mathtt{Infer}}
\newcommand{\Sat}{\mathtt{SAT}_{M}}
\newcommand{\sample}{\mathcal{S}}
\begin{document}
\title{Inferring Properties in Computation Tree Logic}
\author{Rajarshi Roy\inst{1} \and Daniel Neider\inst{2,3}}

\institute{Max Planck Institute for Software Systems, Kaiserslautern, Germany \and TU Dortmund University, Dortmund, Germany 
\and Center for Trustworthy Data Science and Security, University Alliance Ruhr, Dortmund, Germany }

\maketitle

\begin{abstract}
We consider the problem of automatically inferring specifications in the branching-time logic, Computation Tree Logic (CTL), from a given system.
Designing functional and usable specifications has always been one of the biggest challenges of formal methods.
While in recent years, works have focused on automatically designing specifications in linear-time logics such as Linear Temporal Logic (LTL) and Signal Temporal Logic (STL), little attention has been given to branching-time logics despite its popularity in formal methods.
We intend to infer concise (thus, interpretable) CTL formulas from a given finite state model of the system in consideration.
However, inferring specification only from the given model (and, in general, from only positive examples) is an ill-posed problem.
As a result, we infer a CTL formula that, along with being concise, is also language-minimal, meaning that it is rather specific to the given model.
We design a counter-example guided algorithm to infer a concise and language-minimal CTL formula via the generation of undesirable models.
In the process, we also develop, for the first time, a passive learning algorithm to infer CTL formulas from a set of desirable and undesirable Kripke structures.
The passive learning algorithm involves encoding a popular CTL model-checking procedure in the Boolean Satisfiability problem.
\end{abstract}

\section{Introduction}
Formal verification relies on the fact that formal specifications that express the design requirements precisely are readily available or can be constructed easily.
However, this is often an unrealistic assumption since constructing specifications manually is a tedious and error-prone task.
For many years, the availability of functional and usable specifications has been widely believed to be one of the biggest bottlenecks of formal methods~\cite{AmmonsBL02,Rozier16,BjornerH14}. 

Consequently, there have been several works in automatically constructing specifications in a number of temporal logics.
Most of the notable works are targeted towards two popular temporal logics: Linear Temporal Logic (LTL)~\cite{flie,CamachoM19,scarlet} and Signal Temporal Logic (STL)~\cite{dtmethod,MohammadinejadD20}.

However, there is little work that focuses on branching time logics, such as Computation Tree Logic (CTL), although they have been extensively used in Formal Verification~\cite{CimattiCGR00,Biere09}.
Chan~\cite{Chan00} considered the problem of completing queries in CTL, which are incomplete CTL specifications with a missing proposition (or a Boolean combination of them), to infer the strongest invariants.
Wasylkowski and Zeller~\cite{WasylkowskiZ11} mine invariants in CTL for Java programs using queries of
the form $\lA \lF \placeholder$, $\lA \lG (\placeholder_1 \rightarrow \lF\placeholder_2)$.
However, the approaches are limited in their ability to design arbitrary CTL specifications for the system under consideration.

To alleviate the shortcomings in the existing approaches, we devise the first algorithm for inferring interpretable formulas in Computation Tree Logic (CTL) (allowing an arbitrary structure) from a given finite state representation of the system under consideration.
For inferring formulas in CTL, following the existing problem settings~\cite{Chan00,WasylkowskiZ11}, we choose \emph{Kripke structures} as the finite-state representation of systems.

As observed by Roy et al.~\cite{ltl-from-positive-only}, inferring 
temporal logic properties from only positive examples---in this case, just a Kripke structure---is typically an ill-posed problem.
This is because one can always infer trivial formulas such as $\true$ that will hold in any input model.
Thus, to infer meaningful formulas, we rely on two regularizers to search for CTL specifications: the size and the language ((i.e., the set of allowed models) of the specification as also done by Roy et al.
In particular, we solve the following problem:
given a model $M$ and a size bound $b$, find a CTL formula $\Phi$ such that (i) $\Phi$ holds in $M$ (ii) the size of $\Phi$ is smaller than $b$; and (iii) $\Phi$ is language-minimal.
In this problem, the size bound prohibits the inferred formula from explicitly describing the model verbosely. 
The language-minimality criterion, on the other hand, ensures that the inferred formula is not a small trivial formula such as $\true$.
We present more details on the problem formulation in Section~\ref{sec:problem-setting}.

To tackle the above problem, we design a counterexample-guided (CEG) algorithm inspired by Avellaneda and Petrenko~\cite{AvellanedaP18} and Roy et al.~\cite{ltl-from-positive-only}.
The CEG algorithm relies on generating suitable negative examples (in this case, undesirable Kripke structures) to iteratively guide the inference process.
We rely on checking the satisfiability of CTL formulas to generate such negative examples.
We expand on the CEG algorithm in Section~\ref{sec:ceg-algorithm}.

The CEG algorithm relies internally on a \emph{passive learning} algorithm for CTL, which, given a set of desirable Kripke structures and a set of undesirable Kripke structures, learns a CTL formula that holds on the desirable Kripke structures and not on the undesirable ones.
In this paper, we devise a constraint-based passive learning algorithm inspired by the SAT-based algorithm of~\cite{flie}.
The learning algorithm involves symbolically encoding the fixed-point based model checking algorithm~\cite{model-checking-book} for CTL.
To the best of our knowledge, this is the only known work that devises a passive learning algorithm for the full class of CTL.
We describe the algorithm in Section~\ref{sec:passive-learning}.

\subsubsection*{Related Work}
As alluded to in the introduction, there is little work related to inferring formulas in branching-time logics.

One notable related work is the one by Chan~\cite{Chan00}, where he considers the problem of completing CTL queries.
CTL queries are incomplete CTL formulas with a single placeholder indicating the missing atomic proposition (or Boolean combination of them).
While there might be several possible ways of completing the missing part, Chan searches for the strongest invariant in CTL for the given model due to its relevance in model understanding.
Contrary to our approach, Chan's approach can only infer atomic propositions (and their Boolean combinations) but cannot handle complex formulas involving temporal operators.

The other work on inferring CTL formulas is by Wasylkowski and Zeller~\cite{WasylkowskiZ11}, who infer operational preconditions for methods of Java Programs in CTL.
They first construct a so-called object usage model from Java methods, which are then transformed into several Kripke structures.
They then search for CTL formulas that hold in the Kripke structures by iteratively searching through all possible CTL formulas and model-checking them against the Kripke structures to single out the suitable ones.
As they observed, their method does not scale to CTL formulas of arbitrary syntactic structure, and thus, they also resort to handcrafted queries (or templates in their terminology).

For inferring LTL formulas, there are several works that rely on handcrafted queries.
The most notable ones include~\cite{LiDS11,ShahKSL18,KimMSAS19}.
Few others can infer formulas of arbitrary syntactic structure in LTL (or its important fragments), such as~\cite{flie,CamachoM19,Riener19,scarlet}.
For inferring STL formulas, most works focus on inferring formulas of particular syntactic structure~\cite{dtbombara,dtmethod} or solving the parameter search problem for STL~\cite{asarin,rpstl1,rpstl2}.
There are only a handful of works inferring STL formulas of arbitrary structure~\cite{MohammadinejadD20,genetic}.
Finally, Arif et al.~\cite{ArifLERCT20} can infer arbitrary specifications in Past-time LTL and Roy et al.~\cite{0002FN20} can infer arbitrary specifications in Property Specification Language (PSL), an extension of LTL allowing regular expressions.
None of these approaches can be directly applied to learn concise CTL formulas for a given Kripke structure.

In terms of the approach used, the SAT encoding in our passive learning algorithm is similar in spirit to the SMT encoding by Bertrand et al.~\cite{BertrandFS12} for Probabilistic CTL (PCTL).
This encoding is slightly different from ours and was developed independently.
Also, it is designed for solving a different problem of synthesizing small models for a given PCTL formula.

\section{Problem Setting}
\label{sec:problem-setting}

We begin by setting up the main aspects of the central problem of the paper.

\subsection{Kripke structure}

We represent systems in a finite manner using Kripke structures.
A \textit{Kripke structure} $M = (S,I,R,L)$ over a set $\mathcal{P}$  of propositions consists of a set of states $S$, a set of initial states $I\subseteq S$, a transition relation $R\subseteq S\times S$ such that for all states $s$ there exists a state $s'$ satisfying $(s,s')\in R$, and a labeling function $L\colon S\rightarrow 2^{\mathcal{P}}$.
We say that $M$ is finite if it has finitely many states. In that case, we define the size of $M$ as $\lvert S\rvert$.
The set~$\post(s) = \{s'\in S \mid (s,s') \in R\}$ contains all successors of $s \in S$.

A (infinite) \emph{path} of the Kripke structure $M$ is an infinite sequence of states $\pi = s_0s_1\dots$ such that $s_{i+1} \in \post(s_i)$ for each $i\geq 0$.
A \emph{path prefix} of the Kripke structure $M$ is a finite sequence of states $\pi = s_0s_1\dots s_t$ such that $s_{i+1} \in \post(s_i)$ for each $0\leq i\leq t$.
The length $|\pi|$ of a path prefix is the number of states in the prefix $\pi$.
Furthermore, for a path (or a path prefix) $\pi$ and $0\leq i$ ($0\leq i< |\pi|$, respectively), let $\pi[i]$ denote the $i$-th state of $\pi$, and $\pi[i:]$ denote the suffix of $\pi$ from index $i$ on.
For a state $s$, let $\paths(s)$ be the set of all (infinite) paths starting from $s$.

\subsection{Computation Tree Logic (CTL)}

Computation Tree Logic (CTL) is a branching-time logic that reasons about the temporal behavior of (non-deterministic) systems.
To introduce the syntax of CTL formulas, we must introduce two types of formulas, state and path formulas. 
Intuitively, state formulas express properties of states, whereas path formulas express temporal properties of paths. 
For ease of notation, we denote state formulas and path formulas with Greek capital letters and Greek lowercase letters, respectively.
CTL state formulas over $\prop$ are given by the grammar
\[\Phi \Coloneqq p \mid \neg \Phi \mid  \Phi \wedge \Phi  \mid \lE \varphi \mid \lA\varphi,\]
where $p\in \prop$, and $\varphi$ is a path formula.
We include Boolean constants such as $\true$ and $\false$ and other standard Boolean formulas such as $\Phi_1\wedge\Phi_2$ and $\Phi_1\rightarrow\Phi_2$.
Next, CTL path formulas are given by the grammar
\[\varphi \Coloneqq \lX \Phi \mid \Phi \lU \Phi,\]
where $\lX$ represents the neXt operator, and $\lU$ the Until operator.
As syntax sugar, we allow standard temporal operators $\lF$, the Finally operator and $\lG$, the Globally operator, which are defined in the usual manner: $\circ\lF\Phi\coloneqq \circ(\true \lU \Phi)$ and $\circ\lG\Phi\coloneqq \neg\circ (\lF \neg\Phi)$ for $\circ\in\{\lE,\lA\}$.
We define the size $|\Phi|$ of a CTL formula as the size of the syntax DAG of $\Phi$.

We interpret CTL formulas over Kripke structures using the standard definitions~\cite{model-checking-book}.
Given a state $s$ and state formula~$\Phi$, we define when $\Phi$ holds 
 in state $s$, denoted using $s\models \Phi$, inductively as follows:
\begin{align*}
     s \models p & \text{ if and only if } p \in L(s), \\
     s \models \neg \Phi & \text{ if and only if } s \not\models \Phi, \\
    s\models \Phi_1\wedge\Phi_2 & \text{ if and only if } s\models\Phi_1 \text{ and } s\models\Phi_2, \\
    s\models \lE\varphi & \text{ if and only if } \pi\models\varphi \text{ for some } \pi\in\paths(s) \\
    s\models \lA\varphi & \text{ if and only if } \pi\models\varphi \text{ for all } \pi\in\paths(s)
\end{align*}
Similarly, given a path $\pi$ and a path formula $\varphi$, we define when $\varphi$ holds in path $\pi$, also denoted using $\pi\models \varphi$, inductively as follows:
\begin{align*}
\pi\models \lX\Phi &\text{ if and only if } \pi[1] \models \Phi \\
\pi\models \Phi_1 \lU \Phi_2 & \text{ if and only if } \text{there exists some } j\ge 0 \text{ such that } \pi[j]\models \Phi_2 \\
& \hspace{2cm}\text{ and for all } 0\le k < j, \pi[k]\models\Phi_1 
\end{align*}
We now say that a CTL formula $\Phi$ holds on a Kripke structure $M$, denoted by $M\models \Phi$, if $s\models\Phi$ for all initial states $s\in I$ of $M$.

To define the size of a CTL formula $\Phi$, we rely on its set of subformulas, denoted by $\subf(\Phi)$. 
It is defined recursively as follows:
$\subf(p) = \{p\}$, $\subf(\neg\Phi) = \{\neg\Phi\}
\cup\subf(\Phi)$, $\subf(\Phi_1\wedge\Phi_2) = \{\Phi_1\wedge\Phi_2\}\cup \subf(\Phi_1)\cup\subf(\Phi_2)$, $\subf(\circ\lX\Phi) = \{\circ\lX\Phi\}\cup\subf(\Phi)$, $\subf(\circ(\Phi_1\lU\Phi_2)) = \{\circ(\Phi_1\lU\Phi_2)\}\cup\subf(\Phi_1)\cup\subf(\Phi_2),$ where $\circ\in\{\lE,\lA\}$.
We now define the size $|\Phi|$ of a CTL formula as the number of its subformulas $|\subf(\Phi)|$.

\subsection{Problem Formulation}

We now state the central problem of the paper. 
\begin{problem}[CTL Inference]\label{prob:occ-ctl}
	Given a Kripke structure $M$ and a size bound $B$, find a CTL formula $\Phi$ such that:
	\begin{enumerate}
		\item $\Phi$ holds on $M$
            \item $|\Phi|\le B$; and
		\item for all CTL formulas $\Phi'$ that hold on $M$ and $|\Phi'|\le B$, $\Phi'\not\rightarrow\Phi$ or $\Phi\rightarrow\Phi'$.
	\end{enumerate}
\end{problem}
In the above problem, Condition 1 ensures that the prospective CTL formula $\Phi$ holds in the given model $M$. 
Condition 2 ensures that $\Phi$ is not too large.
This condition serves two purposes: it ensures $\Phi$ remains interpretable and also that $\Phi$ does not overfit $M$.
Condition 3 enforces that $\Phi$ holds on a minimal set of models when compared to all formulas $\Phi'$ that hold on $M$ and have size $|\Phi'|\le B$.
This condition prevents $\Phi$ from being a trivial formula such as $\true$ that conveys no information about the underlying system.
Together the three conditions ensure we obtain an interpretable CTL specification that concisely represents the possible behavior of the given model.

\section{A Counterexample-Guided (CEG) Algorithm}
\label{sec:ceg-algorithm}
The high-level design of our algorithm is to iteratively accumulate "negative" examples that direct the search towards the prospective CTL formula.
In particular, the algorithm generates a set $N$ of undesirable Kripke structures on which our prospective formula $\Phi$ must not hold.
Also, it collects a set $D$ of discarded CTL formulas that cannot be possible candidates for the prospective formula $\Phi$.
Both the sets $N$ and $D$ are designed in such a way that $\Phi$ holds on a minimal set of models, meaning that it does not hold on unnecessarily many models, as formalized in Condition 3 of Problem~\ref{prob:occ-ctl}).

We present the outline of our CEG algorithm in Algorithm~\ref{alg:ceg-algo}.
In the algorithm, we initialize our hypothesis formula $\Phi$ to be $\true$ and both the sets $N$ and $D$ to be empty.
We now exploit a procedure $\infer$ that returns a CTL formula $\Phi'$ that holds on $M$, is of size less than or equal to $B$, does not hold on any models in $N$, and is not one of the formulas in $D$.
If there is no such formula, $\infer$ reports this fact and terminates the algorithm.
We describe an implementation of the procedure $\infer$ in the next section.
Now, comparing the formulas $\Phi$ and $\Phi'$, we obtain three cases based on which we update the sets $N$ and $D$.

The first case is when $\Phi \leftrightarrow \Phi'$, that is, $\Phi$ and $\Phi'$ are equivalent, meaning they hold on the exact same set of models (Line~\ref{line:equal}). 
In this case, the CEG algorithm discards $\Phi'$, due to its equivalence to $\Phi$, adding it to $D$.
The second case is when $\Phi' \rightarrow \Phi$ and $\Phi \not\rightarrow \Phi'$, that is, $\Phi'$ holds on a proper subset of the set of models on which $\Phi$ hold (Line~\ref{line:subset}). 
In this case, the CEG algorithm generates a model $M'$ that satisfies $\Phi$ and not $\Phi$ via a CTL satisfiability subroutine~\cite{FriedmannLL10} and adds $M'$ to $N$ to eliminate $\Phi$.
Moreover, it adds $\Phi'$ to $D$ to eliminate $\Phi'$ from the search space and updates $\Phi'$ as the new hypothesis.
The final case is when $\Phi' \not\leftarrow \Phi$, that is, $\Phi'$ does not hold on a subset of the set of models on which $\Phi$ hold (Line~\ref{line:other}). 
In this case, the CEG algorithm generates a model $M'$ that satisfies $\Phi$ and not $\Phi$, again via a CTL satisfiability subroutine and adds $M'$ to $N$ to eliminate $\Phi'$.

\begin{algorithm}[t]
	\caption{CEG Algorithm for CTL formulas }\label{alg:ceg-algo}
	\textbf{Input}: Kripke structure~$M$, bound~$B$
	\begin{algorithmic}[1]
		\STATE $N\coloneqq\emptyset$, $D \coloneqq \emptyset$ 
		\STATE $\Phi\coloneqq\true$
		\WHILE{$\infer(M,B,N,D)$ returns CTL formula $\Phi'$}
		\IF{$\Phi'\leftrightarrow\Phi$}\label{line:equal}
			\STATE Add $\Phi'$ to $D$\label{line:discard1}
		\ELSE			
			\IF{$\Phi'\rightarrow\Phi$}\label{line:subset}
			 	\STATE Add $M'$ to $N$, where  $M'\models\Phi\wedge\neg\Phi'$\label{line:neg1}
			     \STATE Add $\Phi'$ to $D$\label{line:discard2}
                     \STATE $\Phi\coloneqq\Phi'$
			 \ELSE\label{line:other}
			 	\STATE Add $M'$ to $N$, where $M'\models\neg\Phi'\wedge\Phi$\label{line:neg2}
			 \ENDIF
		\ENDIF
		\ENDWHILE
		\RETURN $\Phi$
	\end{algorithmic}
\end{algorithm}

Assuming that the procedure $\infer$ performs as described (and formalized in Lemma~\ref{lem:infer-correctness}), we have the correctness of our CEG algorithm, stated as follows:
\begin{theorem}
The CEG algorithm, Algorithm~\ref{alg:ceg-algo}, terminates and always returns a solution to Problem~\ref{prob:occ-ctl}.
\end{theorem}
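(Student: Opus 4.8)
The plan is to establish two things separately: (i) partial correctness, i.e.\ if the algorithm returns a formula $\Phi$, then $\Phi$ is a solution to Problem~\ref{prob:occ-ctl}; and (ii) termination. For partial correctness, I would maintain a loop invariant asserting that at the top of each iteration of the \textbf{while} loop, the current hypothesis $\Phi$ holds on $M$, satisfies $|\Phi|\le B$, does not hold on any structure in $N$, and is not in $D$. The base case $\Phi=\true$, $N=D=\emptyset$ is immediate (assuming $B\ge 1$). For the inductive step, I would use the specification of $\infer$ (Lemma~\ref{lem:infer-correctness}): $\Phi'$ holds on $M$, has size $\le B$, rejects every structure in $N$, and avoids $D$, so the first two invariant conjuncts are preserved in every branch; the last two need a short check, namely that in the equivalence branch the new $D$ still does not contain $\Phi$ (trivially, $\Phi'$ is added, and $\Phi$ is unchanged and was not in $D$), and in the subset branch that the newly added $M'$ does not falsify the new hypothesis $\Phi'$ — indeed $M'\models\Phi\wedge\neg\Phi'$ is false under $\Phi'$, wait, we need $M'\not\models\Phi'$, which is exactly $M'\models\neg\Phi'$, so $\Phi'$ correctly rejects $M'$; similarly old structures in $N$ are still rejected because $\Phi'\rightarrow\Phi$ and $\Phi$ rejected them. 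When the loop exits, $\infer(M,B,N,D)$ has reported failure, meaning no CTL formula of size $\le B$ holds on $M$, rejects all of $N$, and avoids $D$. I would then argue that Condition~3 holds for the returned $\Phi$: suppose some $\Phi'$ with $|\Phi'|\le B$ holds on $M$ and satisfies $\Phi'\rightarrow\Phi$ but $\Phi\not\rightarrow\Phi'$; such a $\Phi'$ is not in $D$ and rejects every $M'\in N$ (since each $M'$ was added as a witness to $M'\models\Phi\wedge\neg\Psi$ for the then-current hypothesis, and by a telescoping argument using transitivity of $\rightarrow$ along the chain of hypotheses, every $M'\in N$ satisfies the then-relevant formula which the current $\Phi$ still satisfies, hence $\Phi'\rightarrow\Phi$ forces $M'\not\models\Phi'$\,) — this contradicts the failure of $\infer$. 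This last point is the crux and requires care: I must show the invariant ``$\Phi$ holds on every $M'$ that was ever a witness target'' stays true, i.e.\ each newly added negative example is still satisfied by all subsequent hypotheses, which follows because hypotheses only move \emph{down} the $\rightarrow$-order (in the subset branch) or stay equivalent, and in the ``other'' branch $M'$ witnesses $\neg\Phi'\wedge\Phi$ with $\Phi$ the unchanged current hypothesis.

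For termination, I would argue that each iteration strictly decreases a well-founded measure. The natural candidate is the pair consisting of (a) the number of CTL formulas of size $\le B$ that hold on $M$ and are not yet ruled out (up to the semantic equivalence of holding on the same set of models, noting there are finitely many formulas of size $\le B$ over the fixed finite proposition set, hence finitely many such equivalence classes), ordered by inclusion, combined with (b) the position of the current hypothesis in the $\rightarrow$-preorder. In the equivalence branch, $D$ strictly grows and the new $D$ kills at least the equivalence class of $\Phi'=\Phi$, which $\infer$ can no longer return. In the subset branch, $\Phi$ strictly decreases in the $\rightarrow$-order (since $\Phi'\rightarrow\Phi$ but $\Phi\not\rightarrow\Phi'$) and moreover $\Phi'$ is added to $D$; the chain of strictly $\rightarrow$-decreasing hypotheses among finitely many equivalence classes has bounded length. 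In the ``other'' branch, the newly added $M'$ satisfies $\Phi$ (the current, unchanged hypothesis) but not $\Phi'$, so $\Phi'$ — and every formula holding on exactly the same models as $\Phi'$ — is permanently excluded from future $\infer$ calls, strictly shrinking the pool of returnable formulas. In every case the measure drops, so the loop runs finitely many times, and since $\infer$ terminates by assumption, the algorithm terminates.

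I would then assemble these: termination guarantees the loop exits (either by $\infer$ failing, or — note — the loop can only exit via $\infer$ failing, since that is the loop guard), and partial correctness guarantees the returned $\Phi$ satisfies Conditions~1--3. One should also observe the algorithm does return \emph{some} formula rather than looping on the very first call: initially $\Phi=\true$ holds on $M$, has size $1\le B$, and $N=D=\emptyset$, so $\infer$ succeeds at least once unless $B<1$, and even then $\true$ is returned directly when $\infer$ immediately fails — in which case the degenerate invariant still makes $\true$ a valid (indeed the only possible) output, or the problem is vacuous; I would state the standing assumption $B\ge 1$ to sidestep this.

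The main obstacle I anticipate is the Condition~3 argument in partial correctness: making precise and proving the invariant that every negative example accumulated in $N$ is satisfied by the current hypothesis (equivalently, that the monotone ``staircase'' of hypotheses never invalidates an earlier witness), and then using it together with the finiteness of the search space to conclude that a counterexample to language-minimality would have to be a formula that $\infer$ could still have returned, contradicting the exit condition. The bookkeeping across the three branches — especially checking that the ``other'' branch, which does not change $\Phi$, nonetheless makes genuine progress and preserves all invariants — is where the care is needed.
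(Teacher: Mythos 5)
Your overall skeleton --- a loop invariant for partial correctness, a finite-pool elimination argument for termination, and a contradiction argument for Condition~3 of Problem~\ref{prob:occ-ctl} --- is exactly the paper's, only spelled out in more detail. However, the crux step that you yourself flag as delicate contains a sign error that, as written, breaks both the Condition~3 argument and the termination argument for the ``else'' branch. You begin with the correct invariant (the current hypothesis $\Phi$ does \emph{not} hold on any structure in $N$), but in the telescoping argument you replace it with its negation: ``each newly added negative example is still satisfied by all subsequent hypotheses.'' That property neither follows from hypotheses moving down the $\rightarrow$-order (implication propagates \emph{rejection} of a model downward along the chain, not satisfaction) nor is it what you need: to conclude that a hypothetical $\Phi''$ with $\Phi''\rightarrow\Phi$ and $\Phi\not\rightarrow\Phi''$ is rejected by every $M'\in N$, you need $M'\not\models\Phi$ for every $M'\in N$, i.e.\ the invariant you started with. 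The relevant conjunct of each witness is the negative one (the hypothesis current \emph{after} the update fails on $M'$, and every later hypothesis implies it), not the positive one you track.

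Second, in the ``other'' branch you take Line~\ref{line:neg2} literally ($M'\models\neg\Phi'\wedge\Phi$) and conclude that $\Phi'$ is thereby permanently excluded from future $\infer$ calls. That inference is a non sequitur: $\infer$ excludes formulas that \emph{hold} on some structure in $N$, and this $M'$ does not satisfy $\Phi'$ --- it satisfies the current hypothesis $\Phi$, so adding it would exclude $\Phi$ itself and destroy the invariant (and such an $M'$ need not even exist in this branch, since only $\Phi'\not\rightarrow\Phi$ is guaranteed). The line is in fact a typo; the paper's own proof reads it as ``adding a model $M'$ satisfying $\Phi'$ to the negative set $N$,'' i.e.\ the intended witness is $M'\models\Phi'\wedge\neg\Phi$, whose existence is exactly what the branch condition guarantees. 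With that correction your termination measure and both invariants go through; without it, neither the exclusion of $\Phi'$ nor the preservation of ``$\Phi$ rejects all of $N$'' holds in this branch. The remaining ingredients (the $D$-invariant, the $B\ge 1$ caveat, finiteness of the formulas of size at most $B$) are fine and match the paper.
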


\begin{proof}
We first show that the algorithm terminates. 
Observe that there are only finitely many CTL formulas of size less or equal to $B$.
Moreover, $\infer$ cannot return the same formula $\Phi'$ twice; 
once returned, $\Phi'$ is eliminated from the search space either explicitly, by adding $\Phi'$ to the discard pile $D$ (Lines~\ref{line:discard1},\ref{line:discard2}), or implicitly, by adding a model $M'$ satisfying $\Phi'$ to the negative set $N$ (Line~\ref{line:neg2}).

The correctness of the algorithm relies on the correctness of the procedure $\infer$, described in Lemma~\ref{lem:infer-correctness}.
Based on that, if the procedure $\infer$ returns a formula $\Phi'$, then $\Phi'$ must hold on $M$ and must be of size less than or equal to $B$.
Thus the final solution $\Phi$ satisfies Conditions 1 and 2 of Problem~\ref{prob:occ-ctl}, since $\Phi$ is either $\true$ or a formula returned by $\infer$.

We show that the solution also satisfies Condition 3 of Problem~\ref{prob:occ-ctl} by contradiction.
In particular, we assume that our CEG algorithm returned $\Phi$, while there did exist some $\Phi'\neq\Phi$ that satisfies Conditions 1 and 2, and also, $\Phi'\rightarrow\Phi$ and $\Phi'\not\leftarrow\Phi$.
Notice that the CEG algorithm returns the final formula $\Phi$ when $\infer$ cannot return any suitable formula.
If $\infer$ did not return $\Phi'$, it could be because of two possibilities.
First, $\Phi'$ was included in $D$.
However, formulas are added to $D$ if they hold on the same or more models compared to the current hypothesis and, since $\Phi'\not\leftarrow\Phi$, $\Phi'$ should not be in $D$.
Second, $\Phi'$ hold on some model in $N$.
However, $\Phi$ does not hold in any model in $N$ and, since $\Phi'\rightarrow\Phi$, $\Phi'$ should not too.
Thus, $\infer$ would have returned $\Phi'$ if such a formula existed.
\end{proof}

\section{The Passive Learning Algorithm for CTL}\label{sec:passive-learning}

In this section, we describe the design of the procedure $\infer$ that forms a key ingredient in the CEG algorithm, Algorithm~\ref{alg:ceg-algo}.
The crux of this procedure is to solve the well-known problem of passive learning (initiated by~\cite{Gold78}) for CTL formulas.
Since passive learning is a problem of independent interest, we state the problem for CTL and present a solution to it.
Later, in this section, we describe how we utilize the solution to design the procedure $\infer$.

In passive learning, given a sample $\sample=(P,N)$ consisting of a set $P$ of positive models and a set $N$ of negative models, the goal is to find a formula $\Phi$ that is \emph{consistent} with $\sample$, that is, $\Phi$ must hold on all the models in $P$ and must not hold in any model in $N$.
Formally, we state the passive learning problem for CTL as follows:
\begin{problem}[Passive learning of CTL]\label{prob:pass-learning}
Given a sample $\sample=(P,N)$ consisting of two finite sets $P$ and $N$ of Kripke structures, find a minimal (in size) CTL formula $\Phi$ that is consistent with $\sample$.
\end{problem}
The minimality criteria for the prospective formula is to ensure that it is interpretable and does not overfit the given sample.

Our approach to solving the above problem is by reducing it to satisfiability problems in propositional logic.
We thus provide a brief introduction to propositional logic.

\paragraph{Propositional Logic.} Let $\mathit{Var}$ be a set of propositional variables that can be set to Boolean values from $\B = \{0,1\}$ (0 representing $\false$ and 1 representing $\true$).
Formulas in Propositional Logic are inductively constructed as follows:
\[
\propformula \Coloneqq x \in \mathit{Var} \mid \lnot\propformula \mid  \propformula \lor \propformula
\]

As syntactic sugar, we allow the standard Boolean formulas $\true$, $\false$, $\propformula_1 \land \propformula_2$, $\propformula_1 \rightarrow \propformula_2$, and $\propformula_1 \leftrightarrow \propformula_2$.
Note that to avoid confusion in the notation of CTL formulas and propositional formulas, we will be exclusively using $\propformula$ (along with its variants) to denote propositional formulas.

To assign values to propositional variables, we rely on a propositional valuation function $v: \mathit{Var} \to \mathbb{B}$.
We now define the semantics of propositional logic using a satisfaction relation $\models$ that is defined inductively as follows: $v \models x$ if and only if
$v(x) = 1$, $v \models \lnot\propformula$ if and only if $v \nvDash \propformula$, and $v \models \propformula_1 \lor \propformula_2$ if and only if $v \models \propformula_1$ or $v \models \propformula_2$.
In the case that $v \models \propformula$, we say that $v$ satisfies $\propformula$ and call it a \emph{satisfying assignment} of $\propformula$.
A propositional formula $\propformula$ is \emph{satisfiable} if there exists a satisfying assignment $v$ of $\Phi$.
The \emph{size} of a formula is the number of its subformulas (defined in the usual way).
The satisfiability (SAT) problem is the well-known $\NP$-complete problem of deciding whether a given propositional formula is satisfiable.
In the recent past, numerous optimized decision procedures have been designed to handle the SAT problem effectively~\cite{MouraB08,AudemardS18,BarbosaBBKLMMMN22}.

\subsection{SAT-based Learning Algorithm}

We now describe a solution to Problem~\ref{prob:pass-learning} using satisfiability problems in propositional logic, inspired by~\cite{flie,CamachoM19}.
Following their work, we design a propositional formula $\propformula^{\sample}_n$ that enables the search for a CTL formula of size $n$ that is consistent with $\sample$.
Mathematically, the formula $\propformula^{\sample}_n$ has the following properties:
\begin{enumerate}
\item $\propformula^{\sample}_n$ is satisfiable if and only if there exists a CTL formula of size $n$ that is consistent with $\sample$; and
\item from a satisfying assignment of $\propformula^{\sample}_n$, one can easily extract a suitable CTL formula of size $n$.
\end{enumerate}

Based on the formula $\propformula^{\sample}_n$, one can design an iterative algorithm to search for a minimal, suitable CTL formula: for increasing values of $n$, check the satisfiability of $\propformula^{\sample}_n$ and if satisfiable, extract a CTL formula.

Internally, the formula $\propformula^{\sample}_n$ is a conjunction of a number of subformulas with distinct roles:
\[
 \propformula^{\sample}_n \coloneqq \propformula^{\mathit{str}} \wedge \propformula^{\mathit{sem}} \wedge 
 \propformula^{\mathit{con}}
\]
The subformula $\Phi^{\mathit{str}}$ encodes the structure of the prospective CTL formula $\Phi$.
The subformula $\propformula^{\mathit{sem}}$ encodes that the correct semantics of CTL is used to interpret the prospective CTL formula on the given Kripke structures.
Finally, the subformula $\propformula^{\mathit{con}}$ ensures that the prospective CTL formula holds on the models in $P$ and not in the models in $N$.
Next, we like to describe the details of each subformula.
To keep the descriptions concise, we consider the Existence Normal Form (ENF) representation of CTL (Section 6.4 in~\cite{model-checking-book}) that consists of the propositions, Boolean operators, and the temporal modalities $\lE\lX$, $\lE\lG$ and $\lE\lU$.
It is well-known that one can always rewrite any CTL formula to its ENF representation.

\paragraph{Encoding the structure of CTL.}
To encode the structure of a CTL formula, we symbolically encode the \emph{syntax DAG} of the prospective CTL formula.
A syntax DAG is simply a syntax tree of a CTL formula in which the common nodes are merged.
Figure~\ref{fig:syntax-dag} depicts an example of a syntax tree and syntax DAG of a CTL formula.

\begin{figure}
\centering
    \begin{subfigure}[b]{0.3\textwidth}
    \centering
        \begin{tikzpicture}
                \node (1) at (0, 0) {$\lor$};
                \node (2) at (.7, -0.9) {$\lE\lU$};
                \node (3) at (-.7, -0.9) {$\lE\lX$};
                \node (4) at (0, -1.8) {$p$};
                \node (5) at (1.4, -1.8) {$\lE\lG$};
                \node (6) at (1.4, -2.7) {$q$};
                \node (7) at (-.7, -1.8) {$p$};
                \draw[->] (1) -- (2); 
                \draw[->] (1) -- (3);
                \draw[->] (2) -- (4);
                \draw[->] (2) -- (5);
                \draw[->] (3) -- (7);
                \draw[->] (5) -- (6);
        \end{tikzpicture}
        \caption{Syntax tree}
    \end{subfigure}
    \begin{subfigure}[b]{0.3\textwidth}
    \centering
    \begin{tikzpicture}
                \node (1) at (0, 0) {$\lor$};
                \node (2) at (.7, -0.9) {$\lE\lU$};
                \node (3) at (-.7, -0.9) {$\lE\lX$};
                \node (4) at (0, -1.8) {$p$};
                \node (5) at (1.4, -1.8) {$\lE\lG$};
                \node (6) at (1.4, -2.7) {$q$};
                \draw[->] (1) -- (2); 
                \draw[->] (1) -- (3);
                \draw[->] (2) -- (4);
                \draw[->] (2) -- (5);
                \draw[->] (3) -- (4);
                \draw[->] (5) -- (6);
        \end{tikzpicture}
        \caption{Syntax DAG}
    \end{subfigure}
    \begin{subfigure}[b]{0.3\textwidth}
    \centering
    \begin{tikzpicture}
                \node (1) at (0, 0) {6};
                \node (2) at (.7, -0.9) {4};
                \node (3) at (-.7, -0.9) {5};
                \node (4) at (0, -1.8) {3};
                \node (5) at (1.4, -1.8) {2};
                \node (6) at (1.4, -2.7) {1};
                \draw[->] (1) -- (2); 
                \draw[->] (1) -- (3);
                \draw[->] (2) -- (4);
                \draw[->] (2) -- (5);
                \draw[->] (3) -- (4);
                \draw[->] (5) -- (6);
        \end{tikzpicture}
        \caption{Identifiers}
    \end{subfigure}
    \caption{Representations for the CTL formula for $\lE\lX p \lor \lE(p \lU \lE\lG q )$}
    \label{fig:syntax-dag}
\end{figure}

To conveniently encode the syntax DAG of a CTL formula, we first fix a naming convention for its nodes. 
For a formula of size $n$, we assign to each of its nodes an identifier from $\{1,\dots, n\}$ such that the identifier of each node is larger than that of its children if it has any.
Note that such a naming convention may not be unique.
Based on these identifiers, we denote the subformula of $\varphi$ rooted at Node~$i$ as $\varphi[i]$.
Thus, $\varphi[n]$ denotes the entire formula $\varphi$ that is represented by the DAG.

Next, to encode a syntax DAG symbolically, we introduce the following propositional variables: (i) $x_{i,\lambda}$ for $i\in\{1,\dots, n\}$ and $\lambda\in\prop\cup\{\neg,\lor,\lE\lX,\lE\lU,\lE\lG\}$; and (ii) $l_{i,j}$ and $r_{i,j}$ for $i\in\{1,\dots,n\}$ and $j\in\{1,\dots,i\}$.
The variable $x_{i,\lambda}$ tracks the operator labeled in Node~$i$, meaning, $x_{i,\lambda}$ is set to true if and only if Node~$i$ is labeled with $\lambda$.
The variable $l_{i,j}$ (resp., $r_{i,j}$) tracks the left (resp., right) child of Node~$i$, meaning, $l_{i,j}$ (resp., $r_{i,j}$) is set to true if and only if the left (resp., right) child of Node~$i$ is Node~$j$.

We now impose structural constraints on the introduced variables to ensure they encode valid CTL formulas.
The constraints are similar to the ones proposed by Neider and Gavran~\cite{flie}.
To keep the paper self-contained, we mention them below:
\begin{align}
\Big[ \bigwedge_{1\leq i \leq n} \bigvee_{\lambda \in \Lambda} x_{i,\lambda} \Big] \land \Big[\bigwedge_{1 \leq i \leq n} \bigwedge_{\lambda \neq \lambda' \in \Lambda} \lnot x_{i,\lambda} \lor \lnot x_{i,\lambda'}  \Big],\label{eq:label-unique}\\
\Big[ \bigwedge_{2\leq i \leq n} \bigvee_{1\leq j \leq i} l_{i,j} \Big] \land \Big[\bigwedge_{2 \leq i \leq n} \bigwedge_{1 \leq j < j' < i} \lnot l_{i,j} \lor \lnot l_{i,j'}  \Big],\label{eq:left-unique}\\
\Big[ \bigwedge_{2\leq i \leq n} \bigvee_{1\leq j \leq i} r_{i,j} \Big] \land \Big[\bigwedge_{2 \leq i \leq n} \bigwedge_{1 \leq j < j' < i} \lnot r_{i,j} \lor \lnot r_{i,j'}  \Big],\label{eq:right-unique}\\
\bigvee_{p\in\prop} x_{1,p}\label{eq:only-prop}
\end{align}
where $\Lambda=\{\neg,\lor,\lE\lX,\lE\lU,\lE\lG\}$.
The constraint~\ref{eq:label-unique} ensures that each node is labeled by exactly one operator or one proposition.
The constraints~\ref{eq:left-unique} and~\ref{eq:right-unique} enforce that each node (except for Node 1) has exactly one left and exactly one right child, respectively.
Finally, the constraint~\ref{eq:only-prop} ensures that Node~$1$ is labeled by a proposition.
$\Phi^{str}$ is simply the conjunction of all the structural constraints~\ref{eq:label-unique} to~\ref{eq:only-prop}.

Based on a satisfying assignment $v$ of $\Phi^{\mathit{str}}$, one can easily construct a unique CTL formula as follows: label each Node~$i$ with the unique operator $\lambda$ for which $v(x_{i,\lambda})=1$, and mark the left and right children of the node with the unique Nodes $j$ and $j'$ for which $v(l_{i,j})=1$ and $v(r_{i,j'})=1$, respectively.

\paragraph{Encoding the semantics of CTL.}
In order to symbolically encode the semantics of the prospective CTL formula $\Phi$ for a given Kripke structure $M$, we rely on encoding a well-known CTL model-checking procedure (Section 6.4 in~\cite{model-checking-book}).
The procedure involves calculating, for each subformula $\Phi'$ of $\Phi$, the set $\Sat(\Phi') = \{s\in S~|~ s\models \Phi\}$ which consists of the states of $M$ where $\Phi'$ holds.
The computation of the $\Sat(\Phi')$ is done recursively on the structure of $\Phi$ based on fixed-point computations.
For a given Kripke structure $M$, the computations of $\Sat(\Phi')$ for the different possible CTL formulas in the ENF representation are stated below.
\begin{align}
&\Sat(p) = \{ s \in S~|~p \in L(s) \}, \text{ for any }p\in \prop,\\
&\Sat(\Phi \wedge \Psi) = \Sat(\Phi) \cap \Sat(\Psi),\\
&\Sat(\neg\Phi) = S \setminus \Sat(\Phi),\\
&\Sat(\lE\lX\Phi) = \{ s \in S~|~\post(s) \cap \Sat(\Phi) \neq \emptyset \},\\
&\Sat(\lE(\Phi \lU \Psi)) \text{ is the smallest subset }T \text{ of }S, \text{ such that }\nonumber\\
&\hspace{1cm}(1)\ \Sat(\Psi) \subseteq T\text{ and }
(2)\ s \in \Sat(\Phi)\text{ and }\post(s) \cap T \neq \emptyset \text{ implies }s \in T,\label{eq:sat-EU}\\
&\Sat(\lE\lG\Phi)\text{ is the largest subset }T \text{ of }S,\text{ such that }\nonumber\\
&\hspace{1cm}(1)\ T \subseteq \Sat(\Phi)\text{ and }(2)\ s \in T\text{ implies }\post(s) \cap T \neq \emptyset\label{eq:sat-EG}
\end{align}

We intend to symbolically encode the above-described $\Sat$ set computation of CTL formulas.
To do so, intuitively, our approach is to introduce propositional variables that can encode whether a state $s$ of Kripke structure $M$ belongs to $\Sat(\Phi[i])$ for all subformulas $\Phi[i]$ of $\Phi$.
Formally, we introduce propositional variables $y^{M}_{i,s}$ for each $i\in\{1,\dots,n\}$, $s\in \{1,\ldots,|S|\}$ and $M\in P\cup N$.
The variable $y^{M}_{i,s}$ tracks whether subformula $\Phi[i]$
of $\Phi$ holds in state $s$, meaning, $y^{M}_{i,s}$ is set to true if and only if $\Phi[i]$ holds in state $s$ of $M$, i.e., $s\in\Sat(\Phi[i])$.

To ensure the desired meaning of the variables, we impose the following constraints:
\begin{align}
\bigwedge_{1\leq i\leq n} x_{i,p} \rightarrow \bigwedge_{s\in S} \big[ y^{M}_{i,s} \leftrightarrow p\in L(s)\big]\label{eq:prop}\\
\bigwedge_{\substack{{1\leq n}\\{1\leq j,j' < i}}} [x_{i,\wedge}\wedge l_{i,j}\wedge r_{i,j'}] \rightarrow \bigwedge_{s\in S}\big[ y^{M}_{i,s} \leftrightarrow [y^{M}_{j,s} \wedge y^{M}_{j',s}]\big]\label{eq:and}\\
\bigwedge_{\substack{{1\leq n}\\{1\leq j < i}}} [x_{i,\neg}\wedge l_{i,j}] \rightarrow \bigwedge_{s\in S}\big[ y^{M}_{i,s} \leftrightarrow \neg y^{M}_{j,s}\big]\label{eq:not}\\
\bigwedge_{\substack{{1\leq n}\\{1\leq j < i}}} [x_{i,\lE\lX}\wedge l_{i,j}] \rightarrow \bigwedge_{s\in S}\big[ y^{M}_{i,s} \leftrightarrow \bigvee_{s'\in \post(s)} y^{M}_{j,s'}\big]\label{eq:EX}
\end{align}
The above constraints encode the $\Sat$ set computation for the propositions, Boolean operators and the $\lE\lX$ operator.

While the constraints for the propositions, Boolean operators and the $\lE\lX$ operators are a straightforward translation of the $\Sat$ set computations, the constraints for the $\lE\lU$ and $\lE\lG$ operators require some innovation.
This is because they require the computation of the least and greatest fixed-points, respectively, as can be seen from the equations~\ref{eq:sat-EU} and~\ref{eq:sat-EG}.

For the operators $\lE\lU$ and $\lE\lG$, we mimic the steps of the fixed-point computation algorithm (Algorithm 15 and 16 from~\cite{model-checking-book}, respectively) in propositional logic.
For both operators $\lE\lU$ and $\lE\lG$, the fixed-point computation algorithm internally maintains an estimate of the $\Sat$ set and updates it in iteratively.
Thus, to encode the fixed-point computation, we now introduce propositional variables that can encode whether a state $s$ of Kripke structure $M$ belongs to a particular estimate of $\Sat$.

Formally, we introduce propositional variables $y^{M}_{i,s,k}$ for each $i\in\{1,\ldots,n\}$,  $s\in \{1,\ldots,|S|\}$, $k\in\{1,\ldots,|S|+1\}$, and $M\in P\cup N $ that are responsible for encoding estimates of $\Sat$.
The parameter $k$ in the variable $y^{M}_{i,s,k}$ tracks which iterative step of the fixed-point computation the variable encodes.
Also, as the fixed-point algorithms need at most $|S|+1$ iterative steps, $k$ ranges from $1$ to $|S|+1$.

We now impose the following constraints on the introduced variables to ensure their desired meaning:
\begin{align}
\bigwedge_{\substack{{1\leq i\leq n}\\{1\leq j,j' < i}}} [x_{i,\lE\lU}\wedge l_{i,j}\wedge r_{i,j'}] \rightarrow \nonumber\\
\bigwedge_{s\in S}\Big[
\big[y^{M}_{i,s,1} \leftrightarrow y^{M}_{j',s}\big] \wedge
\bigwedge_{1\leq k\leq |S|}
\big[
y^{M}_{i,s,k+1} \leftrightarrow [y^{M}_{i,s,k} \vee [y^{M}_{j,s} \wedge \bigvee_{s'\in\post(s)} y^{M}_{i,s',k}]]
\big] \wedge [y^{M}_{i,s} \leftrightarrow y^{M}_{i,s,|S|+1}]
\Big]\label{eq:EU}\\
\bigwedge_{\substack{{1\leq i\leq n}\\{1\leq j < i}}} [x_{i,\lE\lG}\wedge l_{i,j}] \rightarrow \nonumber\\
\bigwedge_{s\in S}\Big[
\big[y^{M}_{i,s,1} \leftrightarrow y^{M}_{j,s}\big] \wedge
\bigwedge_{1\leq k\leq |S|}
\big[
y^{M}_{i,s,k+1} \leftrightarrow [y^{M}_{j,s} \wedge \bigvee_{s'\in\post(s)} y^{M}_{i,s',k}]
\big]\Big] \wedge 
[y^{M}_{i,s} \leftrightarrow y^{M}_{i,s,|S|+1}]\label{eq:EG}
\end{align}

To show that our encoding indeed captures the meaning of the variables $y^{M}_{i,s,k}$, we present the following lemmas.
\begin{lemma}\label{lem:EU-correctness}
    Let $v$ be an assignment such that $v(x_{i,\lE\lU})=1$, $v(l_{i,j})=1$ and $v(r_{i,j'})=1$.
    Then, for any state $s\in S$ in $M$ and $k\in \{1,\dots,|S|+1\}$, we have: $v(y^{M}_{i,s,k})=1$ if and only if there exists a path prefix $\pi = s_0 s_1 s_2\ldots s_t$ in $M$ where $s=s_0$ and $0\leq t < k$, such that $v(y^{M}_{j,\overline{s}})=1$ for all $\overline{s} \in \{s_0, s_1, \ldots, s_{t-1}\}$ and $v(y^{M}_{j',s_t})=1$.
\end{lemma}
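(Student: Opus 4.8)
The plan is to prove the biconditional by induction on $k$, exploiting the fact that once $v$ sets $x_{i,\lE\lU}$, $l_{i,j}$, and $r_{i,j'}$ to $1$, constraint~\eqref{eq:EU} collapses to the concrete recurrence $v(y^{M}_{i,s,1}) = v(y^{M}_{j',s})$ and $v(y^{M}_{i,s,k+1}) = v(y^{M}_{i,s,k}) \lor \big(v(y^{M}_{j,s}) \land \bigvee_{s'\in\post(s)} v(y^{M}_{i,s',k})\big)$ for every $s\in S$ and $1\le k\le |S|$. This recurrence is precisely the symbolic transcription of the iterative least fixed-point computation of $\Sat(\lE(\Phi[j] \lU \Phi[j']))$ (Algorithm~15 of~\cite{model-checking-book}), so the lemma says the $k$-th iterate consists exactly of the states that can reach a $y^{M}_{j'}$-state along a prefix of at most $k$ states all of whose earlier states are $y^{M}_{j}$-states.

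For the base case $k=1$, the condition $0\le t<1$ forces $t=0$, so the only admissible prefix is the one-state prefix $\pi=s$; the constraint on $\{s_0,\dots,s_{t-1}\}$ is vacuous and the remaining requirement $v(y^{M}_{j',s})=1$ matches $v(y^{M}_{i,s,1})=1$ by the first equation. For the inductive step, assuming the claim for $k$ (with $k\le |S|$), I would argue both directions from the recurrence. If $v(y^{M}_{i,s,k+1})=1$ because $v(y^{M}_{i,s,k})=1$, the induction hypothesis already gives a witnessing prefix with $t<k<k+1$. If instead $v(y^{M}_{j,s})=1$ and $v(y^{M}_{i,s',k})=1$ for some $s'\in\post(s)$, I take the prefix $s'_0\cdots s'_{t'}$ (with $t'<k$) given by the induction hypothesis at $s'$ and prepend $s$, obtaining a prefix $s\,s'_0\cdots s'_{t'}$ whose parameter is $t'+1<k+1$; it is admissible because $s'_0=s'\in\post(s)$, its new front state $s$ satisfies $v(y^{M}_{j,s})=1$, and its last state is unchanged so $v(y^{M}_{j',s'_{t'}})=1$ still holds. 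Conversely, given a witnessing prefix $\pi=s_0\cdots s_t$ for step $k+1$, either $t<k$, in which case the induction hypothesis yields $v(y^{M}_{i,s,k})=1$ and hence $v(y^{M}_{i,s,k+1})=1$ via the first disjunct, or $t=k$ (so $t\ge 1$ since $k\ge 1$); then $v(y^{M}_{j,s_0})=1$, and the suffix $s_1\cdots s_t$ is a witnessing prefix with parameter $k-1<k$ starting at $s_1\in\post(s)$, so the induction hypothesis gives $v(y^{M}_{i,s_1,k})=1$, and the second disjunct of the recurrence fires with $s'=s_1$.

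The main obstacle I anticipate is purely the index bookkeeping: one must keep the number of states in a prefix distinct from the parameter $t$, check that prepending or stripping a state shifts $t$ by exactly one, and verify that the strict inequality $t<k$ is preserved in every case so that the recurrence's increment from $k$ to $k+1$ lines up with adding one state to the prefix. A minor point worth flagging is that the recursive equation in~\eqref{eq:EU} is asserted only for $1\le k\le |S|$, which is exactly enough to run the induction through $k=|S|+1$; the remaining conjunct $y^{M}_{i,s}\leftrightarrow y^{M}_{i,s,|S|+1}$ of~\eqref{eq:EU} is not needed for this lemma but will be combined with it afterwards (using that $|S|+1$ iterations already reach the fixed point) to conclude $v(y^{M}_{i,s})=1$ if and only if $s\in\Sat(\lE(\Phi[j]\lU\Phi[j']))$.
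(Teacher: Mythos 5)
Your proposal is correct and follows essentially the same route as the paper: induction on $k$ using the recurrence that constraint~\eqref{eq:EU} imposes once $x_{i,\lE\lU}$, $l_{i,j}$, $r_{i,j'}$ are fixed. In fact your write-up is more complete than the paper's, which only argues the forward direction of the inductive step explicitly, whereas you also handle the converse (splitting on $t<k$ versus $t=k$ and stripping the first state).
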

\begin{proof}
The proof requires to proceed via induction on the parameter $k$.
    \begin{itemize}
    \item As the base case, let $k=1$. 
    Using the constraint~\ref{eq:EU}, we have $v(y^{M}_{i,s,1})=v(y^{M}_{j',s})$. 
    Thus, $v(y^{M}_{i,s,1})=1$ if and only if there is a trivial path $\pi=s$ (with parameter $t<1$) such that $v(y^{M}_{j',s})=1$.

    \item As the induction hypothesis, let for any $s\in S$ and $k\leq K$ the following hold: $v(y^{M}_{i,s,K})=1$ if and only if there is a path prefix $\pi=s s_1 s_2 \ldots s_t$, $t< K$ such that $v(y^{M}_{j,\overline{s}})=1$ for all $\overline{s} \in \{s, s_1, \ldots, s_{t-1}\}$ and $v(y^{M}_{j',s_t})=1$.
    Now, using the constraint~\ref{eq:EU}, $v(y^{M}_{i,s,K+1})=1$ if and only if (i) $v(y^{M}_{i,s,K})=1$, meaning, there is a path prefix  $\pi=s s_1 s_2 \ldots s_t$, $t<K<K+1$ such that $v(y^{M}_{j,\overline{s}})=1$ for all $\overline{s} \in \{s, s_1, \ldots, s_{t-1}\}$ and $v(y^{M}_{j',s_t})=1$; or (ii) $v(y^{M}_{j',s})=1$ and for some $s'\in\post(s)$, $v(y^{M}_{i,s',K})=1$, meaning, there is a path prefix $\pi=s s' \ldots s_t$, $t<K+1$ such that $v(y^{M}_{j,s'})=1$ for all $\overline{s} \in \{s, s_1, \ldots, s_{t-1}\}$ and $v(y^{M}_{j',s_t})=1$.
    \end{itemize}
\end{proof}

\begin{lemma}\label{lem:EG-correctness}
    Let $v$ be an assignment such that $v(x_{i,\lE\lG})=1$ and $v(l_{i,j})=1$.
    Then, for any state $s\in S$ in $M$ and $k\in \{1,\dots,|S|+1\}$, we have: $v(y^{M}_{i,s,k})=1$ if and only if there exists a path prefix $\pi = s_0 s_1 s_2\ldots s_t$ in $M$ where $s=s_0$ and $t \geq k-1$, such that $v(y^{M}_{j,\overline{s}})=1$ for all $\overline{s} \in \{s_0, s_1, \ldots, s_{t}\}$.
\end{lemma}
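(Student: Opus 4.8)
The plan is to mirror the induction used in Lemma~\ref{lem:EU-correctness}, proceeding by induction on the iteration parameter $k$; the only conceptual difference is that, since $\lE\lG$ corresponds to a \emph{greatest} fixed point, constraint~\ref{eq:EG} peels one state off the \emph{front} of a witnessing path in each step rather than accumulating states as the $\lE\lU$ constraint~\ref{eq:EU} does. Throughout I fix the assignment $v$ with $v(x_{i,\lE\lG})=1$ and $v(l_{i,j})=1$, so that constraint~\ref{eq:EG} pins down $v(y^{M}_{i,s,1})$ and each $v(y^{M}_{i,s,k+1})$ in terms of $v(y^{M}_{j,\cdot})$ and the variables $v(y^{M}_{i,\cdot,k})$.

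For the base case $k=1$, constraint~\ref{eq:EG} gives $v(y^{M}_{i,s,1}) = v(y^{M}_{j,s})$. Hence $v(y^{M}_{i,s,1})=1$ holds exactly when the trivial one-state path prefix $\pi = s$ (for which $t=0=k-1$) has $v(y^{M}_{j,s})=1$; conversely, any path prefix meeting the stated condition has $s$ as its first state, so $v(y^{M}_{j,s})=1$. This gives the asserted characterization for $k=1$.

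For the inductive step, assume the statement for $k=K$ and every state, and take $k=K+1$. By constraint~\ref{eq:EG}, $v(y^{M}_{i,s,K+1})=1$ iff $v(y^{M}_{j,s})=1$ and $v(y^{M}_{i,s',K})=1$ for some $s'\in\post(s)$. For the forward direction, pick such an $s'$; the induction hypothesis supplies a path prefix $\rho = r_0 r_1 \ldots r_t$ with $r_0 = s'$, $t\ge K-1$, and $v(y^{M}_{j,r_m})=1$ for all $0\le m\le t$. Prepending $s$ yields the path prefix $s\,r_0 r_1\ldots r_t$ (valid since $r_0 = s'\in\post(s)$), whose last index is $t+1\ge K = (K+1)-1$ and on which every state — including $s$, by $v(y^{M}_{j,s})=1$ — satisfies the $y^{M}_{j,\cdot}$-condition. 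For the converse, given a path prefix $\pi = s_0 s_1\ldots s_t$ with $s_0 = s$, $t\ge K$, and $v(y^{M}_{j,s_m})=1$ for all $0\le m\le t$, the suffix $\pi[1:] = s_1\ldots s_t$ is again a path prefix, starting at $s_1\in\post(s)$, with last index $t-1\ge K-1$, and with all its states satisfying the $y^{M}_{j,\cdot}$-condition; the induction hypothesis gives $v(y^{M}_{i,s_1,K})=1$, and combining this with $v(y^{M}_{j,s_0})=v(y^{M}_{j,s})=1$, constraint~\ref{eq:EG} forces $v(y^{M}_{i,s,K+1})=1$.

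I expect no genuine obstacle here: the argument is essentially bookkeeping, and the only points requiring attention are the index arithmetic relating the last index $t$ of a witnessing path prefix to the iteration number $k$ (each application of~\ref{eq:EG} adds or removes exactly one state at the front), and the degenerate base case where the witness consists of a single state. This should be contrasted with the separate step that will eventually relate $v(y^{M}_{i,s,|S|+1})$ — equivalently $v(y^{M}_{i,s})$ via constraint~\ref{eq:EG} — to the actual semantics of $\lE\lG$, which additionally needs a pigeonhole/cycle argument to convert a sufficiently long finite witness into an infinite one; the present lemma isolates only the correctness of the symbolic encoding of the individual fixed-point iterations.
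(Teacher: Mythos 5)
Your proof is correct and follows essentially the same route as the paper's: induction on the iteration index $k$, unfolding constraint~\ref{eq:EG} in the base case and in each step. In fact, you spell out both directions of the biconditional in the inductive step (prepending a state for one direction, passing to the suffix $\pi[1{:}]$ for the other), which the paper's own proof only sketches, so no changes are needed.
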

\begin{proof}
The proof requires an induction on the parameter $k$ similar to the previous proof.
    \begin{itemize}
    \item As the base case, let $k=1$. 
    Using the constraint~\ref{eq:EG}, we have $v(y^{M}_{i,s,1})=v(y^{M}_{j,s})$. 
    Thus, $v(y^{M}_{i,s,1})=1$ if and only if there is a trivial path $\pi=s$ (with parameter $t\geq 0$) such that $v(y^{M}_{j,s})=1$.

    \item As the induction hypothesis, let for any $s\in S$ and $k\leq K$ the following hold: $v(y^{M}_{i,s,K})=1$ if and only if there is a path prefix $\pi=s s_1 s_2 \ldots s_t$, $t\geq K-1$ such that $v(y^{M}_{j,\overline{s}})=1$ for all $\overline{s} \in \{s, s_1, \ldots, s_{t}\}$.
    Now, using the constraint~\ref{eq:EG}, $v(y^{M}_{i,s,K+1})=1$ if and only if  $v(y^{M}_{j,s})=1$ and for some $s'\in\post(s)$, $v(y^{M}_{i,s',K})=1$, meaning, there is a path prefix $\pi=s s' \ldots s_t$, $t\geq K$ such that $v(y^{M}_{j,s'})=1$ for all $\overline{s} \in \{s, s_1, \ldots, s_{t}\}$.
    \end{itemize}
\end{proof}

The formula $\propformula^{\mathit{sem}}$ now is simply the conjunction of all the semantic constraints,~\ref{eq:prop} to~\ref{eq:EG}.

\paragraph{Encoding the consistency with the models}
Finally, to encode that the prospective formula is consistent with $\sample$, we have the following constraint:
\[
\big[\bigwedge_{M\in P}\bigwedge_{s\in I} y^{M}_{n,s}\big] \wedge \big[\bigwedge_{M\in N}\bigvee_{s\in I} \neg y^{M}_{n,s}\big],
\]
which encodes that the prospective formula must hold in all of the initial states for the models in $P$ and must not hold in some initial state for the models in $N$.

We now state the following lemma to state the correctness of the encoding 
\begin{lemma}\label{lem:pass-learning-correctness}
Let $\sample=(P,N)$ be a sample, $n \in \mathbb{N}\setminus \{0\}$
and $\propformula^{\sample}_n$ the propositional formula described above.
Then, the following holds:
\begin{enumerate}
\item If a CTL formula of size $n$ consistent with $\sample$ exists, then the propositional formula $\propformula^{\sample}_n$ is satisfiable.
\item If $v\models\propformula^{\sample}_n$, then $\Phi^v$ is a CTL formula of size $n$ that is consistent with $\sample$.
\end{enumerate}
\end{lemma}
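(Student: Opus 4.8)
The plan is to prove the two directions separately, establishing in each case that the structural, semantic, and consistency constraints together have exactly the intended meaning. The key technical ingredient for both directions is a correspondence between satisfying assignments $v$ of $\propformula^{\sample}_n$ and CTL formulas $\Phi^v$ of size $n$, under which $v(y^{M}_{i,s}) = 1$ holds if and only if $s \in \Sat(\Phi^v[i])$ in $M$. Once that correspondence is in place, both statements follow by unwinding the consistency constraint at the root node $n$.

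First I would handle direction (2): given $v \models \propformula^{\sample}_n$. Since $v \models \propformula^{\mathit{str}}$, the constraints~\ref{eq:label-unique}--\ref{eq:only-prop} guarantee that each node has a unique label and unique children (with node $1$ a proposition), so $\Phi^v$ is a well-defined CTL formula in ENF of size exactly $n$, as already explained in the paragraph on encoding structure. The core claim is then: for every $i \in \{1,\dots,n\}$, every $M \in P \cup N$, and every $s \in S$, we have $v(y^{M}_{i,s}) = 1 \iff s \in \Sat(\Phi^v[i])$. I would prove this by induction on $i$ (equivalently, on the structure of $\Phi^v[i]$, which is legitimate since children have strictly smaller identifiers). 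The base case and the Boolean/$\lE\lX$ cases are immediate from constraints~\ref{eq:prop}--\ref{eq:EX} combined with the matching $\Sat$-equations. For the $\lE\lU$ case, I would invoke Lemma~\ref{lem:EU-correctness}: it tells us $v(y^{M}_{i,s,|S|+1}) = 1$ iff there is a path prefix of length at most $|S|+1$ witnessing $\Phi^v[j]$ along the way and $\Phi^v[j']$ at the end; by the induction hypothesis this is exactly a witness that $s$ satisfies $\lE(\Phi^v[j] \lU \Phi^v[j'])$, and since any such witnessing prefix can be shortened to avoid repeated states (hence to length $\le |S|$), the bound $|S|+1$ on $k$ loses nothing. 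Constraint~\ref{eq:EU} then forces $v(y^{M}_{i,s}) = v(y^{M}_{i,s,|S|+1})$, closing this case. The $\lE\lG$ case is symmetric, using Lemma~\ref{lem:EG-correctness}: $v(y^{M}_{i,s,|S|+1}) = 1$ iff there is a prefix of length $\ge |S|$ all of whose states satisfy $\Phi^v[j]$; such a prefix of length $\ge |S|$ must repeat a state and hence can be extended to an infinite path staying in $\Sat(\Phi^v[j])$, which is precisely the condition for $s \in \Sat(\lE\lG\Phi^v[j])$. Finally, the consistency constraint gives $v(y^{M}_{n,s}) = 1$ for all $s \in I$, $M \in P$ (so $M \models \Phi^v$) and $v(y^{M}_{n,s}) = 0$ for some $s \in I$, $M \in N$ (so $M \not\models \Phi^v$); thus $\Phi^v$ is consistent with $\sample$.

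For direction (1), suppose a CTL formula $\Psi$ of size $n$ consistent with $\sample$ exists. Rewriting $\Psi$ into ENF preserves size (this is where I would be slightly careful — I would note that the ENF rewriting used in the paper is size-preserving on the syntax DAG, or otherwise phrase the claim modulo that rewriting), and I pick a valid node numbering $1,\dots,n$ of its syntax DAG respecting the child-before-parent order. I then define $v$: set $x_{i,\lambda}$, $l_{i,j}$, $r_{i,j}$ to reflect this DAG, set $y^{M}_{i,s} = 1 \iff s \in \Sat(\Psi[i])$, and set the $y^{M}_{i,s,k}$ to the $k$-th iterate of the corresponding fixed-point algorithm (Algorithm 15/16 of~\cite{model-checking-book}) on $\Psi[i]$. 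One checks $v$ satisfies $\propformula^{\mathit{str}}$ by construction, $\propformula^{\mathit{sem}}$ because the $\Sat$-equations and the fixed-point iteration semantics hold in $M$ (and terminate within $|S|$ steps, so the $k = |S|+1$ entry equals the fixed point), and $\propformula^{\mathit{con}}$ because $\Psi$ is consistent with $\sample$.

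The main obstacle I anticipate is not any single case but getting the fixed-point bookkeeping exactly right, specifically reconciling the paper's variables $y^{M}_{i,s,k}$ — whose semantics Lemmas~\ref{lem:EU-correctness} and~\ref{lem:EG-correctness} describe in terms of path prefixes — with the set-theoretic fixed-point characterizations~\ref{eq:sat-EU} and~\ref{eq:sat-EG} of $\Sat$. The crux is the argument that $|S|+1$ iterations suffice: for $\lE\lU$ one needs that a minimal witnessing prefix has no repeated state, and for $\lE\lG$ one needs that a prefix of length $\ge |S|$ forces a cycle and hence an infinite continuation. Both are standard but must be stated explicitly, since they are what make the finite index range $k \le |S|+1$ in constraints~\ref{eq:EU} and~\ref{eq:EG} sound and complete. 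Everything else is a routine structural induction leaning on the already-proved lemmas.
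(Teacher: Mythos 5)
Your proposal is correct and follows essentially the same route as the paper's proof: direction (1) by constructing an assignment from the given formula's syntax DAG and $\Sat$ sets, and direction (2) by structural induction on $\Phi^v$ establishing $v(y^{M}_{i,s})=1 \iff s\in\Sat(\Phi^v[i])$, invoking Lemmas~\ref{lem:EU-correctness} and~\ref{lem:EG-correctness} for the $\lE\lU$ and $\lE\lG$ cases. If anything, you are more explicit than the paper about why $|S|+1$ iterations suffice (shortening witnessing prefixes for $\lE\lU$, pumping a repeated state into an infinite path for $\lE\lG$), a step the paper leaves implicit.
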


\begin{proof}
The proof of the first part of the above lemma proceeds almost identically to the proof of Lemma 1 in~\cite{flie}.
From a prospective CTL formula, say $\Phi$, first one constructs an assignment $v$.
This is done by setting the values $v(x_{i,\lambda})$, $v(l_{i,j})$, $v(r_{i,j})$, $v(y^{M}_{i,s})$ and $v(y^{M}_{i,s,k})$ based on the structure and semantics of $\Phi$.
For instance, one sets $v(x_{i,\lambda})=1$ if and only if Node~$i$ of the syntax DAG of $\Phi$ is labeled with operator $\lambda$, one sets $v(l_{i,j})$ and $v(r_{i,j})$ according to the structure of the syntax DAG of $\Phi$ and so on.
One can check that such an assignment $v$ satisfies the structural constraints,~\ref{eq:label-unique} to~\ref{eq:only-prop}.
A similar check can also be done for the semantic constraints,~\ref{eq:prop} to~\ref{eq:EG}.

For proving the second part, we consider $v \models \propformula^{\sample}_n$ and $\Phi^v$ to be the CTL formula obtained from $v$.
In particular, $v \models \Phi^\mathit{str}$, and, thus, the variables $x_{i,\lambda}$, $l_{i,j}$, and $r_{i,j}$ encode the syntax DAG of a unique formula $\Phi^v$.
We now show by induction on the structure of $\Phi^{v}$ that $v(y^{M}_{i,s}) = 1$ if and only if $s\in \Sat(\Phi[i])$ for each subformula $\Phi[i]$ of $\Phi^v$ and each model $M\in P\cup N$:
\begin{itemize}[itemsep=10pt]
    \item Let $\Phi[i] = p$ for a proposition $p \in \mathcal P$, that is, $v(x_{i, p}) = 1$.
    Then, the constraint~\ref{eq:prop} immediately ensures that $v(y^{M}_{i, s}) = 1$ if and only if $s\in\Sat(p)$ for the model $M$.

    \item Let $\Phi[i] = \Phi[j] \wedge \Phi[j']$, that is, $v(x_{i, \wedge}) = 1$.
    Thus, combining the induction hypotheses together with the constraint~\ref{eq:and}, we have that $v(y^{M}_{i, s}) = 1$ if and only if $s\in\Sat(\Phi[i])$ and $s\in\Sat(\Phi[j'])$.
    
    \item Let $\Phi[i] = \neg \Phi[j]$, that is, $v(x_{i, \lnot}) = 1$.
    Thus, combining the induction hypothesis together with the constraint~\ref{eq:not}, we have $v(y^{M}_{i, s}) = 1$ if and only if $s\not\in\Sat(\Phi[j])$.
    
    \item Let $\Phi[i] = \lE\lX \Phi[j]$, that is, $v(x_{i, \lE\lX}) = 1$.
    Thus, combining the induction hypothesis together with the constraint~\ref{eq:EX}, we have $v(y^{M}_{i, s}) = 1$ if and only if there is some $s'\in\post(s)$ such that $s'\in\Sat(\Phi[j]))$.

    \item Let $\Phi[i] = \lE(\Phi[j]\lU\Phi[j'])$, that is, $v(x_{i, \lE\lU}) = 1$. 
    Using the constraint~\ref{eq:EU}, we have $v(y^{M}_{i,s})=v(y^{M}_{i,s,|S|+1})$. 
    Thus, combining the induction hypothesis with Lemma~\ref{lem:EU-correctness}, we have: $v(y^{M}_{i,s})$ if and only there exists a path prefix $\pi = s s_1 s_2\ldots s_t$, $0\leq t \leq |S|$ such that $\overline{s}\in\Sat(\Phi[j])$ for all $\overline{s} \in \{s_0, s_1, \ldots, s_{t-1}\}$ and $s_t\in\Sat(\Phi[j'])$.
    This is a necessary and sufficient condition that leads to $s\in\Sat(\Phi[i])$.
    
    \item Let $\Phi[i] = \lE\lG(\Phi[j])$, that is, $v(x_{i, \lE\lG}) = 1$. 
    Using the constraint~\ref{eq:EG}, we have $v(y^{M}_{i,s})=v(y^{M}_{i,s,|S|+1})$. 
    Thus, combining the induction hypothesis with Lemma~\ref{lem:EU-correctness}, we have: $v(y^{M}_{i,s})$ if and only if there exists a path prefix $\pi = s s_1 s_2\ldots s_t$, $t\geq |S|$ such that $\overline{s}\in\Sat(\Phi[j])$ for all $\overline{s} \in \{s_0, s_1, \ldots, s_{t}\}$.
    This is a necessary and sufficient condition that leads to $s\in\Sat(\Phi[i])$.
 \end{itemize}

\end{proof}

\subsection{Implementation of Infer}

For constructing the procedure $\infer$, we reuse the solution devised for the passive learning problem, Problem~\ref{prob:pass-learning}.
Here, we additionally require the prospective CTL formula to be not one of the discarded formulas present in the set $D$.
As a result, we design a propositional formula $\propformula^{\neq \Phi}$ that ensures that the prospective CTL formula is not syntactically equal to $\Phi$.
We can extend this formula to design the following formula:
\[
\propformula^D\coloneqq \bigwedge_{\Phi\in D} \propformula^{\neq \Phi}.
\]
We incorporate this formula in our solution to the passive learning problem to obtain the procedure $\infer$, which is sketched in Algorithm~\ref{alg:infer}.

\begin{algorithm}[t]
\caption{Procedure $\infer$}\label{alg:infer}
\textbf{Input}: Parameters $M$, $B$, $N$, $D$
\begin{algorithmic}[1]
    \STATE $\sample \coloneqq (P,N)$, where $P=\{M\}$
    \FOR{$n=1,2,\dots,B$}
    \STATE Construct formula $\propformula^{\sample}_n\wedge \propformula^{D}$
    \IF{$\propformula^{\sample}_n\wedge \propformula^{D}$ is satisfiable (with a satisfying assignment $v$)}
        \RETURN SAT, $\Phi^v$
    \ENDIF
    \ENDFOR
    \RETURN No solution exists
\end{algorithmic}
\end{algorithm}

We finally state the correctness of the procedure $\infer$ in the following theorem.
The proof of the theorem follows directly from the correctness (Lemma~\ref{lem:pass-learning-correctness}) of encoding for the passive learning algorithm.
\begin{theorem}\label{lem:infer-correctness}
Given parameters $M$, $B$, $N$ and $D$ (mentioned in Section~\ref{sec:ceg-algorithm}), the procedure \text{Infer} either returns SAT along with a CTL formula $\Phi$ that holds on $M$, is of size less than or equal to $B$, does not hold in any Kripke structures in $N$ and is not one of the formulas in $D$, if such a formula exists; or returns UNSAT.
\end{theorem}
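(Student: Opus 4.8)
The plan is to reduce the statement to Lemma~\ref{lem:pass-learning-correctness} together with the defining property of $\propformula^D$. Recall that $\infer$ sets $P=\{M\}$, $\sample=(P,N)$, and for $n=1,\dots,B$ checks satisfiability of $\propformula^{\sample}_n\wedge\propformula^D$, returning $\Phi^v$ as soon as a satisfying assignment $v$ is found and reporting ``No solution exists'' (i.e.\ UNSAT) otherwise. Throughout, I would use that every model in $v\models\propformula^{\mathit{str}}$ encodes the syntax DAG of a unique CTL formula $\Phi^v$, and that, by construction, $v\models\propformula^{\neq\Phi}$ holds precisely when $\Phi^v$ is syntactically distinct from $\Phi$.

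First I would prove soundness. Suppose $\infer$ returns $\Phi^v$ for some $n\in\{1,\dots,B\}$, so $v\models\propformula^{\sample}_n\wedge\propformula^D$. Since $v\models\propformula^{\sample}_n$, part~2 of Lemma~\ref{lem:pass-learning-correctness} gives that $\Phi^v$ is a CTL formula of size $n\le B$ consistent with $\sample=(\{M\},N)$; consistency means exactly that $\Phi^v$ holds on $M$ and on no Kripke structure in $N$. Since in addition $v\models\propformula^D=\bigwedge_{\Phi\in D}\propformula^{\neq\Phi}$ and $v$ encodes the syntax DAG of $\Phi^v$, the property of $\propformula^{\neq\Phi}$ yields $\Phi^v\neq\Phi$ for every $\Phi\in D$, i.e.\ $\Phi^v\notin D$. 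Hence whatever $\infer$ returns in the SAT case has all four required properties; in particular, if no formula with these properties exists, $\infer$ cannot return SAT, so it returns UNSAT.

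Next I would prove completeness: if some CTL formula $\Psi$ with the four properties exists, then $\infer$ returns SAT. Let $n^\ast=|\Psi|\le B$. Then $\Psi$ is consistent with $\sample=(\{M\},N)$ (it holds on $M\in P$ and on no structure in $N$) and $\Psi\notin D$. By part~1 of Lemma~\ref{lem:pass-learning-correctness}, the assignment $v_\Psi$ built from $\Psi$ in that lemma's proof satisfies $\propformula^{\sample}_{n^\ast}$; moreover $v_\Psi$ encodes the syntax DAG of $\Psi$, and since $\Psi$ is syntactically distinct from every element of $D$, the property of $\propformula^{\neq\Phi}$ gives $v_\Psi\models\propformula^D$. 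Thus $\propformula^{\sample}_{n^\ast}\wedge\propformula^D$ is satisfiable, so the loop in Algorithm~\ref{alg:infer} either returns SAT at some $n\le n^\ast$, or reaches $n=n^\ast$ and returns SAT there; in all cases the output is SAT.

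The main obstacle is pinning down $\propformula^{\neq\Phi}$ and its characterization cleanly, since node identifiers in the syntax DAG are not unique: one must argue that extending a model of $\propformula^{\mathit{str}}$ by $\propformula^D$ admits exactly the formulas syntactically different from every element of $D$, regardless of the naming used. This requires giving $\propformula^{\neq\Phi}$ explicitly (for instance as a disjunction asserting, at some node, a mismatch in the label variable or in a child-pointer variable, together with a size check when $|\Phi|\neq n$) and verifying both directions of the correspondence. Once that property is established, the remainder of the proof is the bookkeeping above.
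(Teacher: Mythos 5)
Your proposal is correct and follows essentially the same route as the paper, which disposes of this theorem in a single sentence by appealing to Lemma~\ref{lem:pass-learning-correctness}; your version simply spells out the soundness and completeness bookkeeping that the paper leaves implicit. The obstacle you flag is real but lies in the paper rather than in your argument: $\propformula^{\neq\Phi}$ is never defined explicitly, and because node identifiers in a syntax DAG are not unique, a correct definition must rule out \emph{every} encoding of each discarded formula, not just one --- your suggested fix (a disjunction over nodes of label/child mismatches, quantified appropriately over namings) is the right repair.
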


\section{Conclusion}
We study the problem of inferring a concise specification in CTL that suitably describes the behavior of a given Kripke structure.
To infer such a specification, we rely on size and language minimality of CTL formulas as regularizers.
We design a counter-example guided (CEG) algorithm that, in an iterative manner, infers a concise and language-minimal CTL formula.
As a subroutine, the CEG algorithm relies on solving the passive learning problem for CTL, which is to learn a minimal CTL formula consistent with a given set of desirable and undesirable Kripke structures.
For the passive learning problem, we devise a solution based on iterative SAT solving.
We prove that all our algorithms are sound and complete, meaning that they always output an optimal CTL formula.

As future work, we plan to implement the inference algorithms in an open-source tool and study their ability to infer interpretable formulas. 
Further, we plan to investigate the task of inferring formulas in other branching-time logics, such as CTL*~\cite{CTLstar} (which is more expressive than CTL), and ATL~\cite{AlurATL} (which can express properties over concurrent structures).

\subsubsection*{Acknowledgments}
The work was partly funded by the Deutsche Forschungsgemeinschaft (DFG, German Research Foundation) grant number 434592664.

\bibliographystyle{splncs04}
\bibliography{refs}

\end{document}